\newtheorem{theorem}{Theorem}
\newtheorem{lemma}[theorem]{Lemma}
\newcommand\E{\mathbb{E}}
\newcommand\R{\mathbb{R}}
\newcommand\cs{\mathcal{S}}
\newcommand\disc{\mathrm{disc}}
\newcounter{note}[section]
\begin{document}

\title{On the discrepancy of random low degree set systems}
\author{Nikhil Bansal\thanks{CWI and TU Eindhoven, Netherlands.
\texttt{bansal@gmail.com}. Supported by a NWO Vidi grant 639.022.211 and an ERC consolidator grant 617951.}
 \and Raghu Meka \thanks{UCLA. \texttt{raghum@cs.ucla.edu}. Supported by NSF grant CCF-1553605.}
}
\maketitle
\begin{abstract}
Motivated by the celebrated Beck-Fiala conjecture, we consider the random setting where there are $n$ elements and $m$ sets and each element lies in $t$ randomly chosen sets. In this setting, Ezra and Lovett showed an $O((t \log t)^{1/2})$ discrepancy
bound in the regime when $n \leq m$ and an $O(1)$ bound when $n \gg m^t$.

In this paper, we give a tight $O(\sqrt{t})$ bound for the entire range of $n$ and $m$, under a mild assumption that $t = \Omega (\log \log m)^2$. The result is based on two steps. First, applying the partial coloring method to the case when $n = m \log^{O(1)} m$ and using the properties of the random set system we show that the overall discrepancy incurred is at most $O(\sqrt{t})$. Second, we reduce the general case to that of $n \leq m \log^{O(1)}m$ using LP duality and a careful counting argument. \end{abstract}


\section{Introduction}
Let $(V,\cs)$ be a set system with $V = [n]$ and $\cs=\{S_1,\ldots,S_m\}$ a collection of subsets of $V$. For a two-coloring $\chi: V \rightarrow \{-1,+1\}$, the discrepancy of  a set $S$ is defined as $\chi(S) = |\sum_{i\in S} \chi(i)|$, and measures the imbalance from an even-split of $S$.
The discrepancy  of the system $(V,\cs)$ is defined as
\[\disc(\cs) = \min_{\chi: V \rightarrow \{-1,+1\}} \max_{ S \in \cs} \chi(S).\]
That is, it is the minimum imbalance  of all sets in $\cs$ over all possible two colorings $\chi$.

Discrepancy is a widely studied topic and has applications
to many areas in mathematics and computer
science. For more background we refer the reader to
the books \cite{Chazelle,MAT99,Panorama}. In particular, discrepancy is closely
related to the problem of rounding fractional solutions
to a linear system of equations \cite{LSV},
and has found several applications in approximation algorithms and
optimization.

An important problem, motivated by the rounding fractional solutions to column-sparse linear systems, is
to understand the discrepancy of sparse systems where each element $i\in [n]$ lies in at most $t$ sets.
In a classic result, Beck and Fiala \cite{BF81} showed that the discrepancy of such systems is at most $2t-1$.
This bound was recently improved by Bukh  to $2t-\log^*t$ \cite{Bukh16}.
Improved bounds with a better dependence on $t$, but at the expense of dependence on $n$, are also known and after long line of work the best such bound is $O(t^{1/2} (\log n)^{1/2})$ due to Banaszczyk \cite{B97}.
These results have also been made algorithmic in recent years \cite{B10,BDG16,LevyRR17}.

It is a long-standing conjecture that the discrepancy of such set systems
is $O(t^{1/2})$ \cite{BF81}.
Despite much work, the problem is open even for very special cases such as when the hypergraph corresponding to the set
system is simple, i.e.~any two sets intersect in at most one element. Another interesting question to get the tight $O(t^{1/2})$ bound in the case when we have
the additional property that the sets also size at most $t$. Here the best known bound is $O((t \log t)^{1/2})$ based on a direct application of the Lov\'{a}sz Local Lemma.

\paragraph{Random set system model.}
Recently, Ezra and Lovett \cite{EL15} consider the problem in a natural random model, where there are $n$ elements and $m$ sets and each element $i \in [n]$ lies in exactly $t$ random sets. That is, the $t$-tuple of sets containing $i$ is chosen uniformly at random among the $\binom{m}{t}$ possibilities. In the following, by a random set system we refer to this model.

Ezra and Lovett \cite{EL15} proved the following two results in the random model.
(i) For $n\leq m$, the expected discrepancy is $O (({t \log t})^{1/2})$, and (ii) for $n \gg m^t$, the expected discrepancy is $O(1)$. We remark that an $\Omega(t^{1/2})$ lower bound on the expected discrepancy also holds in the random model (e.g.~when $n=m=2t$, as can be seen easily using the spectral lower bound method \cite{Chazelle}).

There are two natural questions left open from their work.
First, whether these results can be extended to the entire range of $n$ and $m$, i.e.~when $n \in [m,m^t]$.
This is particularly interesting, as the result of \cite{EL15} in the regime when $n \leq m$ is based on Lov\'{a}sz Local Lemma, which fails
for inherent reasons\footnote{As the average set size is $nt/m \gg t$.} when $n \gg m$.
A second natural question is whether their bound can be improved to the optimum bound of $O(t^{1/2})$, especially for the important case of $n = m$.
Again, the local lemma inherently loses an additional $(\log t)^{1/2}$ factor when $n=\Theta(m)$.

\subsection{Our results and overview}
Our main result addresses both these questions, and is the following.

\begin{theorem}
\label{thm:main} Let $(V,\cs)$ be a random set system on $n$ elements and $m$ sets, where each element lies in $t$ sets. Then, for every $n$ and $m$, the expected discrepancy of the set system is  $O(t^{1/2})$,  provided that $t = \Omega((\log \log m)^2)$.
\end{theorem}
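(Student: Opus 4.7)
The proof follows the two-stage decomposition outlined in the abstract: for the regime $n \leq m \polylog m$ we apply iterated partial coloring tuned to the random model, and for larger $n$ we reduce to this regime via LP duality and counting.

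Stage~1 uses the Lovett--Meka partial coloring primitive in two phases. While the number of fractional elements $|A_k|$ is much larger than $m$, one can afford ``free'' rounds with discrepancy budget $\lambda_S^{(k)} = 0$ for every set $S$, halving $|A_k|$ without changing $\chi(S)$ at all: the entropy condition $\sum_S \exp\bigl(-(\lambda_S^{(k)})^2 / O(|S \cap A_k|)\bigr) \leq |A_k|/16$ is trivially satisfied as long as $|A_k| \geq O(m)$. After $O(\log(n/m))$ such rounds $|A_k|$ has dropped to $\Theta(m)$ while $\chi(S)$ is still zero. In the subsequent ``active'' phase set $(\lambda_S^{(k)})^2 \asymp |S \cap A_k| \cdot \log(m/|A_k|)$; combined with the concentration $|S \cap A_k| \approx (|A_k|/n) |S| = |A_k| t/m$ from the random model, these contributions decay geometrically in $k$ and telescope to a total of $O(\sqrt{t})$. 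The hypothesis $t = \Omega((\log \log m)^2)$ provides just enough Chernoff slack to union-bound the concentration over $O(\log n)$ rounds and $m$ sets simultaneously, even though $A_k$ is produced by a process that can depend on the random set system.

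Stage~2 reduces arbitrary $n$ to $n \leq m \polylog m$ via LP duality and a counting argument on ``types''. Each element's type is the $t$-subset of sets it belongs to, and two elements of identical type can be paired and colored oppositely to contribute zero to every set's discrepancy. Elements of nearly-identical types can be aggregated via the dual LP at a small per-pair cost, and a careful counting argument shows that the surviving type classes after matching number at most $m \polylog m$ with high probability over the random model; Stage~1 then finishes the job on this residual. The main obstacle is ensuring that this reduction loses only an additive $O(\sqrt{t})$ in discrepancy uniformly across $n \in [m \polylog m, m^t]$: a naive pairing of near-identical elements loses $\Omega(\sqrt{t})$ per mismatch, so the LP-duality viewpoint is essential to argue that mismatches cancel on average rather than accumulate. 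A secondary subtlety in Stage~1 is the joint tightness of the random-model concentration and the partial-coloring entropy bound throughout all rounds, which is precisely what the $(\log \log m)^2$ condition on $t$ provides.
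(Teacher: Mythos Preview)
Your two-stage outline matches the paper's at the level of slogans, but both stages have genuine gaps.

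\textbf{Stage 2 is not the paper's reduction and does not work as stated.} The paper does \emph{not} pair elements by type. Instead it fixes $k=O(m\log^2 m)$, applies the deterministic Beck--Fiala bound to the last $n-k$ columns to produce a discrepancy vector $b$ with $\|b\|_\infty\le 2t-1$, and then shows (Theorem~\ref{thm:1}) that with high probability the first $k$ random columns can \emph{fractionally} cancel any such $b$, i.e.\ $2tB_\infty^m\subset P$. The LP duality is used to rewrite this containment as $\|y^TA\|_1>2t$ for all $y\in B_1^m$, and the counting argument is over a net of \emph{dual vectors} $y$ (bucketed by coordinate magnitude and sign pattern), not over element types. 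Your type-pairing idea fails already for $n$ in the intermediate range, say $n=m^2$ and $t$ moderate: the number of types is $\binom{m}{t}\gg n$, so almost all elements have a unique type and no pairing is possible. The phrase ``nearly-identical types aggregated via the dual LP'' does not correspond to a concrete mechanism, and there is no reason the residual should have size $m\,\polylog m$.

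\textbf{Stage 1 mishandles adaptivity.} You propose to control $|S\cap A_k|$ by Chernoff and then union-bound over $m$ sets and $O(\log n)$ rounds. But the alive set $A_k$ is chosen by the partial-coloring algorithm \emph{after} seeing the random matrix, so you cannot fix $A_k$ and apply Chernoff. The paper instead takes a union bound over \emph{all} $\binom{k}{\ell}$ possible $\ell$-subsets of columns (via Lemma~\ref{lem:prob}); this is precisely why $k$ must be only $m\,\polylog m$ and why the condition becomes $t=\Omega(\log^2(ek/m))$. The paper's per-round budgets are also set differently from yours: a uniform $c\sqrt{t}/i^2$ in rounds $i\le\log t$ (Phase~1) and $c_j\in\{0,\infty\}$ thereafter (Phase~2), with the analysis bounding the number of \emph{large} rows in any $\ell$-column submatrix rather than the size of each individual $S\cap A_k$. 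Your free-round phase while $|A_k|\ge 16m$ is fine (and is the content of the paper's Phase~0, done in one LP step via a basic feasible solution), but the subsequent analysis needs the all-subsets union bound.
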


In particular, this gives the tight $O(t^{1/2})$ bound for the entire range of $n$ and $m$, assuming $t \geq  \Omega((\log \log m)^2)$. Moreover, the algorithm can be implemented in randomized polynomial time.
The result is based on two main ideas.


%

\paragraph{Reduction of $n$ to $k$.} We show that the problem with arbitrary $n$, $m, t$ can be {\em reduced} to the case of $n \leq k$, where $k= C m  \log^2 m$, with high probability, for a fixed constant $C$.
More precisely, let $A$ be the $m \times n$ incidence matrix of the random set system, and let $a_i$ denote the $i$-th column of $A$. We start by applying the Beck-Fiala theorem \cite{BF81} to the elements $\{k+1,\ldots,n\}$
to find a $\{-1,1\}$ coloring $\chi'$ with discrepancy at most  $2t-1$,~i.e.$\|\sum_{i>k} \chi'(i) a_i \|_\infty  \leq 2t-1$.
Let us denote this discrepancy vector by $b \in [-2t+1,2t-1]^m$.

We show that with high probability,
there is a {\em fractional} coloring $\chi''$ (i.e.~with colors in $[-1,1]$) of the elements $\{1,\ldots,k\}$ with discrepancy exactly $-b$.
Together this gives a coloring $\chi$ with discrepancy $0$, where the elements $k+1,\ldots,n$ are colored $\pm 1$, but the elements $1,\ldots,k$ have colors in $[-1,1]$. As the first $k$ columns are still random, this gives a ``reduction" of the random Beck-Fiala problem for general $n$ to that for $k$.

The existence of the coloring $\chi''$ follows from the following result of independent interest.
\begin{theorem}
\label{thm:1}
For all $c > 0$, there exists a constant $C > 0$ and $c' \in (0,1)$ such that the following holds.
Let $a_1,\ldots,a_k \in \{0,1\}^m$ be random vectors with $t$ ones. Let $P := \{ \sum_i a_i x_i :   \  x_i \in [-1,1]\} \in \R^m$ be the set of discrepancy vectors achievable by fractional colorings of $a_1,\ldots,a_k$. Then for $k \geq C m \log^2 m$, with probability at least $1 - 1/m^c$, it holds that  $ 2t B_\infty^m \subset P$, where $B_\infty^m = \{y \in \R^m:\|y\|_\infty \leq 1\}$ is the $\ell_\infty$ ball in $\R^m$.
%
%
\end{theorem}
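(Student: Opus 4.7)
The plan is to apply LP duality to convert $2tB_\infty^m \subseteq P$ into a dual inequality involving $A^\top$, and then verify this inequality with high probability via concentration plus a net argument. Let $A$ be the $m \times k$ matrix whose columns are $a_1,\ldots,a_k$. By Farkas' lemma, the system $Ax = y$ with $x \in [-1,1]^k$ is feasible iff no $u \in \R^m$ satisfies $\langle u, y\rangle > \sum_i |\langle u, a_i\rangle|$. Taking the supremum over $y \in 2tB_\infty^m$, whose support function is $2t\|\cdot\|_1$, the containment $2tB_\infty^m \subseteq P$ is equivalent to the inequality
\begin{equation}\label{eq:dual-plan}
\sum_{i=1}^k |\langle u, a_i\rangle| \;\geq\; 2t\,\|u\|_1 \qquad \text{for all } u \in \R^m,
\end{equation}
which by positive homogeneity need only be checked on the unit $\ell_1$ sphere.

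For a fixed $u$ with $\|u\|_1 = 1$, the quantity $Z_u := \sum_i |\langle u, a_i\rangle|$ is a sum of $k$ i.i.d.\ variables each bounded by $t\|u\|_\infty \leq t$, so Bernstein's inequality concentrates $Z_u$ exponentially around $\mu(u) := k\,\E|\langle u, a_1\rangle|$. I would then lower bound $\mu(u)$ in two regimes. If $|\sum_j u_j|$ is a constant fraction of $1$, the mean shift $\E\langle u, a_1\rangle = (t/m)\sum_j u_j$ alone forces $\E|\langle u, a_1\rangle| \geq c\,t/m$. Otherwise $u$ is essentially centered, and using the hypergeometric variance identity
\[
\mathrm{Var}(\langle u, a_1\rangle) \;=\; \frac{t(m-t)}{m(m-1)}\,\|u - \bar u\|_2^2
\]
together with a Paley--Zygmund / Khintchine-type anticoncentration (sums over a uniformly random $t$-subset are sub-Gaussian once $t$ is mildly large), one obtains $\E|\langle u, a_1\rangle| \geq c\sqrt{t/m}\,\|u\|_2$. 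Taking $k \geq Cm\log^2 m$ with a large enough $C$, both lower bounds push $\mu(u)$ comfortably past $2t$.

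To promote pointwise to uniform control over \eqref{eq:dual-plan}, I would stratify the unit $\ell_1$ sphere by the ``spread'' of $u$---for instance, by the level sets of $|u|$ or by its effective sparsity $\|u\|_1^2/\|u\|_2^2$---and, on each stratum, build an $\epsilon$-net via Maurey-style sparsification, approximating $u$ by a convex combination of $\mathrm{polylog}(m)$ signed basis vectors, giving log-entropy $(\log m)^{O(1)}$. A union bound of the per-point Bernstein tail, which is $m^{-\Omega(c)}$ for $k = Cm\log^2 m$, together with a short continuity argument, closes the uniform bound. The main obstacle will be the spread regime, where $\mu(u)$ per sample is only of order $\sqrt{t/m}$ rather than $t/m$, so the Bernstein margin is slim; this is where both $\log m$ factors in the hypothesis $k \geq Cm\log^2 m$ appear essential, with one absorbing the per-point failure probability $m^{-c}$ and the other matching the entropy of the hardest net of spread vectors.
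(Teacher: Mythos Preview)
Your LP--duality reduction is correct and identical to the paper's: $2tB_\infty^m \subseteq P$ iff $\|y^TA\|_1 \geq 2t$ for every $y$ with $\|y\|_1 = 1$. The high-level plan of ``per-point lower bound on $\E|\langle y,a\rangle|$ plus a union bound over a net'' is also how the paper's \emph{weak} bound $k=O(m^4\log m)$ is obtained. But two of your ingredients for the sharp bound $k=O(m\log^2 m)$ do not hold as stated.

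\textbf{The Khintchine step fails for sparse centered $u$.} Take $u=\tfrac12 e_1-\tfrac12 e_2$. Then $\langle u,a\rangle\neq 0$ only when $a$ hits exactly one of the two coordinates, so $\E|\langle u,a\rangle| = \tfrac12\Pr[a_1\neq a_2]\asymp t/m$, whereas your claimed bound $c\sqrt{t/m}\,\|u\|_2\asymp\sqrt{t/m}$ is much larger when $t\ll m$. The distribution is too atomic at $0$: the fourth-to-second moment ratio is $\Theta(m/t)$, so Paley--Zygmund only recovers $\E|\langle u,a\rangle|\gtrsim t/m$, not $\sigma$. More generally, for $u$ supported on $s\ll m/t$ coordinates one gets only $\E|\langle u,a\rangle|\asymp t/m$, independent of $\|u\|_2$.

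\textbf{The Maurey net with polylog budget does not give usable continuity.} With $s$ Maurey samples you control only $\|u-u'\|_2\leq 1/\sqrt{s}$, and the perturbation of $Z_u=\|u^TA\|_1$ satisfies $|Z_u-Z_{u'}|\leq\sum_i|\langle u-u',a_i\rangle|$. Even granting the randomness of $A$, this is of order $k\sqrt{t/m}\,\|u-u'\|_2$ in mean, so to make it $o(t)$ you need $s\gg k^2/(mt)=\Theta(m\log^4 m/t)$, not $\mathrm{polylog}(m)$. The resulting log-entropy is $\Theta(m\,\mathrm{polylog}\,m)$, and your stated per-point tail $m^{-\Omega(c)}$ (i.e.\ $\exp(-O(\log m))$) cannot absorb it. In fact the Bernstein tail depends strongly on the shape of $u$ and is not uniformly $m^{-\Omega(c)}$.

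The paper avoids both issues by \emph{not netting $u$ at all}. For fixed $y$ it reveals the first $t-1$ sampled indices of a column (call $v=\sum_{j<t}y_{i_j}$) and defines a surrogate $0\leq X\leq |y^Ta|$ whose value depends only on: the dominant dyadic level set $c(y)=\arg\max_j\sum_{|y_i|\in(2^{-j-1},2^{-j}]}|y_i|$, the sign pattern of $y$ on the three adjacent classes $c(y),c(y)\pm 1$, and whether the last index $i_t$ lands in one of them. With $n(y)=|c(y)|$ one gets $\E X\gtrsim 2^{-c(y)}n(y)/m\gtrsim 1/(m\log m)$ and $|X|\leq 2^{-c(y)}\leq 1/n(y)$, so Bernstein yields tail $\exp(-\Omega(n(y)\,k/(m\log m)))$; the union bound runs only over the $\exp(O(n(y)\log m))$ possible sign patterns and class locations. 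Balancing the two forces $k=\Theta(m\log^2 m)$. This ``sufficient-statistic'' coarsening of $y$ --- rather than any $\eps$-net --- is the idea your plan is missing.
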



To prove Theorem \ref{thm:1}, we use LP duality to give an equivalent condition for the property $ 2t B_\infty^m \subset P$.
Next, we use a counting argument to show that this condition is satisfied with high probability for $k$ random vectors. We first prove a weaker bound of $k = O(m^4\log m)$ using a standard $\epsilon$-net argument. Later, we give a much more careful argument to improve this bound to to $k = O(m \log^2 m)$.

\paragraph{Partial Coloring.} It remains to modify the fractional coloring $\chi''$ on $[k]$ to an integral $\{-1,+1\}$ coloring, while incurring low discrepancy.
To achieve this, we apply the partial coloring procedure of Lovett and Meka \cite{LM15} over $O(\log k)$ iterations.
The main issue here is  to ensure that the overall discrepancy stays bounded by $O(t^{1/2})$ over all the iterations.
To this end, we use the property that the starting set system on $k$ columns is random to control the potential used in the partial coloring lemma of Lovett and Meka.
However, as the partial coloring method gives no control on which subset of the original $k$ columns remain after each iteration, we incur a penalty due to a union bound over all subsets of the original columns. This is where we require that  $k$ is not too large relative to $m$.
In particular, we show the following.
\begin{theorem}
\label{thm:2}
Let $A$ be a random $m\times k$ matrix where each column has $t$ ones. Then, if $t = \Omega(\log^2 (ek/m))$ then with probability at least $1-\exp(-t)$, the discrepancy of $A$ is $O(t^{1/2})$.
\end{theorem}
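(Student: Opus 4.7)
The plan is to iteratively apply the partial coloring lemma of Lovett--Meka \cite{LM15}. Starting from the zero fractional coloring $x^{(0)} = 0 \in [-1,1]^k$, I produce $x^{(1)}, x^{(2)}, \ldots$ so that at round $i$ the support of fractional coordinates $T_i \subseteq [k]$ has size at most $k_i := k/2^i$. At each round we choose a width $c_j^{(i)} > 0$ per row $j$ satisfying the LM constraint $\sum_j \exp(-(c_j^{(i)})^2/16) \leq k_i/16$; the lemma then returns $x^{(i+1)}$ with $|T_{i+1}| \leq k_i/2$ and $|\langle v_j, x^{(i+1)}-x^{(i)}\rangle|\leq c_j^{(i)}\sqrt{|S_j\cap T_i|}$ for every row $j$. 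After $R = O(\log k)$ rounds the coloring is integral, and the row-$j$ discrepancy is bounded by the telescoping sum $\sum_{i<R} c_j^{(i)} \sqrt{|S_j \cap T_i|}$.

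The key technical preliminary is a uniform upper bound on the restricted row sizes $|S_j\cap T|$ valid for every $T\subseteq[k]$ and $j\in[m]$. Since $T_i$ is determined by the LM algorithm and may be any subset of $[k]$ of the appropriate size, I take a union bound over the $\binom{k}{k_i}$ possibilities. Applying a Bernstein-type inequality to the hypergeometric-like variable $|S_j\cap T|$ (which has mean $|T|t/m$) and union-bounding over subsets, rows and rounds, with probability at least $1-\exp(-t)$ one obtains
\[
|S_j\cap T|\;\leq\; O\bigl(|T|\,t/m + |T|\log(ek/|T|) + \log m + t\bigr)\qquad\forall T\subseteq[k],\ j\in[m].
\]
The hypothesis $t = \Omega(\log^2(ek/m))$ enters precisely here: it guarantees $\log(ek/m) = O(\sqrt{t})$, which is exactly what is needed to absorb the union-bound penalty term $|T|\log(ek/|T|)$ in the forthcoming sum.

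The analysis then splits into a \emph{dense} phase ($k_i \geq m$, of length $\log(k/m)$) and a \emph{sparse} phase ($k_i < m$, of length $\log m$). In the sparse phase the minimum uniform width compatible with LM is $c_j^{(i)} = \Theta(\sqrt{\log(m/k_i)})$; substituting the row-size bound above gives per-round discrepancies that sum geometrically to $O(\sqrt{t})$. In the dense phase the LM constraint is satisfied with slack, and the widths must be chosen carefully (not simply uniform and constant) so that the $O(\log(k/m))$ rounds together contribute $O(\sqrt{t})$ rather than a larger quantity, again invoking the bound $\log(ek/m) = O(\sqrt{t})$.

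The main obstacle is controlling the dense phase. The naive choice $c_j^{(i)} = \Theta(1)$ uniform in $j$ would give per-round discrepancy $\sqrt{s_i^*} = O(\sqrt{k_i t/m + k_i \log(ek/m)})$, which sums to $O(\sqrt{kt/m} + \sqrt{kL})$ with $L = \log(ek/m)$; this is unacceptable whenever $k \gg m$. Overcoming this requires exploiting the slack in the LM condition when $k_i \gg m$, namely using non-uniform widths $c_j^{(i)}$ that are smaller for wide rows so that the LM potential $\sum_j \exp(-(c_j^{(i)})^2/16)$ is bounded via the concentrated empirical distribution of row sizes rather than via the worst-case size. Pushing this refined analysis through, and verifying all conditions at the transition $k_i \approx m$ where the two phases meet, is the crux of the proof and is exactly where the threshold $t = \Omega(\log^2(ek/m))$ becomes tight.
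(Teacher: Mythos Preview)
Your overall framework---iterated Lovett--Meka partial coloring with a union bound over the surviving column sets---is the right one, but two of the steps do not go through as you describe.

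\textbf{Dense phase.} What you flag as ``the crux'' is in fact trivial. When $k_i \geq 16m$, the LM constraint $\sum_j \exp(-c_j^2/16) \leq k_i/16$ is satisfied by taking $c_j = 0$ for \emph{every} row, so each such round incurs zero discrepancy. Equivalently (and this is what the paper does, in a single step rather than $\log(k/m)$ rounds), one passes to a basic feasible solution of $\{Ax = Ax^{(0)},\; x\in[-1,1]^k\}$: this has at most $m$ fractional coordinates and zero discrepancy. Your proposed non-uniform widths depending on the row-size distribution are not needed here.

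\textbf{Sparse phase.} Your uniform max-row-size bound
\[
|S_j \cap T| \;\leq\; O\bigl(|T|\, t/m + |T|\log(ek/|T|) + \log m + t\bigr)
\]
is correct but too weak to yield the geometric sum you claim. In the first sparse round $|T|\approx m$, and the union-bound penalty $|T|\log(ek/|T|)\approx m\log(ek/m)=\Theta(m\sqrt{t})$ dominates everything; with $c_j=\Theta(1)$ the per-round discrepancy is then $\Theta(\sqrt{m\sqrt{t}})\gg\sqrt{t}$. The paper never tries to control the maximum row size. Instead, in round $i$ it fixes a threshold $s_0=\Theta(t/i^5)$ and shows (via Lemma~\ref{lem:prob} together with a union bound over $\binom{k}{\ell}\binom{m}{\ell/32}$ submatrices) that with high probability at most $\ell/32$ rows of the current $\ell$-column submatrix exceed $s_0$. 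Setting the per-row target $c_j\|v_j\|_2 = c\sqrt{t}/i^2$, small rows have large $c_j$ and contribute at most $\ell/32$ to the LM potential, while the few big rows contribute at most another $\ell/32$; these targets sum over $i\leq\log t$ to $O(\sqrt{t})$. For rounds $i>\log t$ the paper switches to $c_j=0$ for rows of size $>c\sqrt{t}$ and $c_j=\infty$ otherwise, so a row accrues nothing until its size drops below $c\sqrt{t}$, after which it trivially contributes at most its remaining size.
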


Combining Theorems \ref{thm:1} and \ref{thm:2} directly gives Theorem \ref{thm:main}.
In particular, the condition $t = \Omega((\log \log m)^2)$ arises as $t = \Omega(\log^2 (ek/m))$ and $k/m = \log^{C}m$.

\paragraph{Limitations.}
By a more elaborate algorithm and case analysis, the
lower bound  on $t$  in Theorem \ref{thm:2} can be improved to $t \geq \tilde{\Omega}(\log \log m)$, where $\tilde{\Omega}(\cdot)$ hides lower order factors. However, we do not describe these more complex calculations here as $ t \approx \log \log m$ is a natural bottleneck for our methods. 
In particular, $k = \Omega( m \log m)$ is necessary for Theorem \ref{thm:1} to hold, and in this setting, if $t \ll \log \log m$, then there exists several $\ell \times \ell $ submatrices of $A$, where the average row size is $\Omega(t)$,
leading to a discrepancy of $t^{1/2}$  that could add up over the iterations.

\subsection{Preliminaries}
We will need the following probabilistic tail bound.
\begin{lemma}(Bernstein's inequality.)
\label{lem:bernstein}
If $X_1,X_2,\ldots,X_n$ are independent real-valued random variables with $|X_i - \E[X_i]| \leq M$, $\sigma_i^2 = \E[X_i^2] - \E[X_i]^2$. Then
\[ \Pr[\sum_i (X_i - \E[X_i]) > t] \leq \exp\left(\frac{-t^2/2}{( \sum_i \sigma_i^2 + Mt/3)} \right) \]
\end{lemma}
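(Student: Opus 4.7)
The plan is to prove this by the standard Chernoff--Cram\'er exponential moment method, specialized with a careful moment bound that accounts for both the variance and the almost-sure range $M$. First I would reduce to the centered case by setting $Y_i = X_i - \E[X_i]$, so that $\E[Y_i] = 0$, $|Y_i| \leq M$, and $\E[Y_i^2] = \sigma_i^2$. Then, for any $\lambda > 0$, Markov's inequality applied to the nonnegative random variable $\exp(\lambda \sum_i Y_i)$ together with independence gives
\[
\Pr\!\left[\sum_i Y_i > t\right] \leq e^{-\lambda t}\,\E\!\left[\exp\!\left(\lambda \sum_i Y_i\right)\right] = e^{-\lambda t}\prod_i \E[\exp(\lambda Y_i)].
\]
So the task reduces to bounding each single-variable moment generating function $\E[\exp(\lambda Y_i)]$ and then optimizing $\lambda$.

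The heart of the proof is the single-variable bound. Expanding the exponential as a power series and using $\E[Y_i]=0$, one has $\E[\exp(\lambda Y_i)] = 1 + \sum_{k \geq 2} \lambda^k \E[Y_i^k]/k!$. Since $|Y_i| \leq M$, I can bound $|\E[Y_i^k]| \leq M^{k-2}\E[Y_i^2] = M^{k-2}\sigma_i^2$ for all $k \geq 2$. Then the elementary inequality $k! \geq 2 \cdot 3^{k-2}$ (verified by induction) converts the tail of the series into a geometric sum, yielding, for any $\lambda$ with $\lambda M < 3$,
\[
\E[\exp(\lambda Y_i)] \;\leq\; 1 + \sigma_i^2 \sum_{k \geq 2} \frac{\lambda^k M^{k-2}}{k!} \;\leq\; 1 + \frac{\lambda^2 \sigma_i^2}{2(1-\lambda M/3)} \;\leq\; \exp\!\left(\frac{\lambda^2 \sigma_i^2}{2(1-\lambda M/3)}\right),
\]
using $1+x \leq e^x$ in the last step. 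Multiplying over $i$ and writing $V = \sum_i \sigma_i^2$ gives
\[
\Pr\!\left[\sum_i Y_i > t\right] \leq \exp\!\left(-\lambda t + \frac{\lambda^2 V}{2(1-\lambda M/3)}\right).
\]

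Finally, I would optimize the exponent by the clever choice $\lambda = t/(V + Mt/3)$, which automatically satisfies $\lambda M < 3$. A short calculation shows $1 - \lambda M/3 = V/(V + Mt/3)$, so the second term in the exponent simplifies to $\lambda^2(V+Mt/3)/2 = \lambda t/2$, and the whole exponent collapses to $-\lambda t/2 = -\tfrac{t^2/2}{V + Mt/3}$. This is exactly the stated bound. The main obstacle is really a bookkeeping one, namely finding the per-variable moment bound that interpolates correctly between the Gaussian regime (where variance dominates and one recovers a sub-Gaussian tail $e^{-t^2/2V}$) and the Poisson/boundedness regime (where $Mt/3$ dominates and one recovers a sub-exponential tail $e^{-3t/2M}$); the trick $k! \geq 2\cdot 3^{k-2}$ is what makes the geometric sum close and produces the specific constant $1/3$ in the denominator of the final bound.
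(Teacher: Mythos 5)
Your proof is correct. The paper states Bernstein's inequality as a known preliminary (Lemma~\ref{lem:bernstein}) without giving a proof, so there is no in-paper argument to compare against; what you give is the standard Chernoff--Cram\'er derivation via the moment generating function. The chain of steps checks out: the per-variable moment bound $|\E[Y_i^k]|\leq M^{k-2}\sigma_i^2$ for $k\geq 2$, the elementary bound $k!\geq 2\cdot 3^{k-2}$ to sum the series into $\lambda^2\sigma_i^2/(2(1-\lambda M/3))$ for $\lambda M<3$, and the choice $\lambda = t/(V+Mt/3)$ (which indeed satisfies $\lambda M<3$ since $\lambda M/3 = (Mt/3)/(V+Mt/3) < 1$ and gives $1-\lambda M/3 = V/(V+Mt/3)$), which collapses the exponent to $-\lambda t/2 = -\tfrac{t^2/2}{V+Mt/3}$. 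This is precisely the stated bound, so the proposal is complete.
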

The lower tail follows by replacing $X$ by $-X$ above.

\paragraph{Stochastic Dominance.} For non-negative random variables $X$ and $Y$, we say that $X$ stochastically dominates $Y$ if for all $a>0$,
$\Pr[X > a] \geq \Pr[Y>a]$. We will use this as follows.

Let $X_1,\ldots,X_n$ be independent copies of $X$ and let $Y_1,\ldots,Y_n$ be independent copies of $Y$. Then for any $t>0$,
\[ \Pr[X_1 + \ldots + X_n \geq t] \geq \Pr[Y_1 + \ldots + Y_n \geq t]. \]

%
%
\paragraph{Partial Coloring Lemma.}
The algorithmic partial coloring lemma due to Lovett and Meka \cite{LM15}, takes as input some fractional coloring and
target discrepancy bounds for each row, and finds another partial coloring satisfying these row-wise discrepancy bounds and
where at least half the variables are set to $-1$ or $+1$.

\begin{lemma}(Partial Coloring Lemma \cite{LM15}).
\label{lem:pcol}
 Let $v_1,\ldots,v_m \in \R^n$, and $x_0 \in [-1,1]^n$ be a starting point. Let $c_1,\ldots,c_m$ be parameters such that $\sum_{j=1}^m \exp(-c_j^2/16) \leq n/16$, and let $\delta >0$. Then there exists an efficient randomized algorithm that runs in time $O((m+n)^3 \cdot \delta^{-2} \cdot \log (nm/\delta))$ and with probability at least $0.1$ finds a point $x\in [-1,1]^n$ such that
\begin{enumerate}
\item $|\langle x-x_0,v_j \rangle | \leq c_j \|v_j\|_2$ for each $j\in [m]$.
\item $|x_i| \geq 1-\delta$ for at least $n/2$ indices $i\in [n]$.
\end{enumerate}
\end{lemma}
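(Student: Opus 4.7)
}
The plan is to implement the Lovett--Meka Brownian-walk strategy: simulate a restricted random walk in $[-1,1]^n$ starting at $x_0$, taking small Gaussian steps in a subspace that avoids both the cube faces and the discrepancy hyperplanes, and argue that after polynomially many steps the walk has frozen at least half the coordinates at $\pm 1$ without ever violating a discrepancy constraint.

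Concretely, fix a step size $\eta \ll \delta$ and a time horizon $T = \Theta(1/\eta^2)$. At step $k$, maintain the index set $I_k = \{ i : |x_{k,i}| \leq 1-\delta\}$ of ``alive'' variables and the index set $J_k = \{ j : |\langle x_k - x_0, v_j\rangle| \leq (c_j - 1)\|v_j\|_2\}$ of ``loose'' rows. Define
\[ W_k = \mathrm{span}(\{e_i : i \notin I_k\} \cup \{v_j : j \notin J_k\}), \qquad V_k = W_k^\perp, \]
draw a standard Gaussian $g_k$ in $V_k$, and set $x_{k+1} = x_k + \eta g_k$ (clipping coordinates that would exit $[-1,1]$). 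This guarantees automatically that once a coordinate is frozen or a row is tight, it cannot move again, so $x_T \in [-1,1]^n$ and the discrepancy increments $\langle x - x_0, v_j\rangle$ stay inside $\pm c_j\|v_j\|_2$ modulo a small slack we control below.

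For conclusion~(1) I would, for each row $j$, consider the process $M_k^{(j)} = \langle x_k - x_0, v_j\rangle$. Conditioned on $V_k$, the increment $\eta \langle g_k, v_j\rangle$ is a centered Gaussian of variance at most $\eta^2\|v_j\|_2^2$, so $M_k^{(j)}$ is a martingale with sub-Gaussian increments; a standard exponential moment / reflection argument gives $\Pr[\max_{k\leq T}|M_k^{(j)}| \geq c_j\|v_j\|_2] \leq 2\exp(-c_j^2/8)$ (losing a small constant in the exponent to buffer for discretization and clipping). For conclusion~(2), track the potential $\Phi_k = \|x_k\|_2^2$. Since each step is Gaussian in $V_k$, $\E[\Phi_{k+1} - \Phi_k \mid x_k] \geq \eta^2 \dim V_k - O(\eta\cdot\|\text{clipping}\|)$. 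Because $\Phi_k \leq n$ throughout, summing yields $\E[\sum_{k<T} \dim V_k] \leq n/\eta^2 = O(T)$, so on average $\dim V_k = O(1)$ per step; equivalently the total dimension blocked by frozen coordinates plus tight rows is at least $n - O(1)$ on most steps.

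To convert this into ``at least $n/2$ coordinates frozen,'' I would combine the martingale tail estimate with linearity of expectation: the expected number of rows ever tight is at most $\sum_j 2\exp(-c_j^2/8) \leq n/4$ by the hypothesis $\sum_j \exp(-c_j^2/16) \leq n/16$ (absorbing constants into the buffer above). Conditioning on the high-probability event that no discrepancy constraint is ever violated and that at most $n/4$ rows become tight at time $T$, the subspace $V_T$ has codimension at least $|I_T^c| - n/4$ inside the frozen-coordinate complement; combined with the potential bound forcing $|I_T^c| \geq n/2 + n/4$, we conclude $|I_T^c| \geq n/2$. A Markov-style argument on the joint success probability (discrepancy tails good and enough coordinates frozen) yields overall success probability at least $0.1$ after amplifying by $O(1)$ independent trials if needed. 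The runtime bound follows because each of the $T = O(1/\eta^2)$ steps is dominated by computing the projector onto $V_k$, an $O((n+m)^3)$ operation; choosing $\eta = \Theta(\delta/\sqrt{\log(nm/\delta)})$ makes the discretization and clipping losses negligible.

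The main obstacle will be the joint control of the two events: the exponential-tail bound for each row must be proved with a sharp enough constant that summing over rows (via the hypothesis on $c_j$) still leaves room for the potential argument to force $n/2$ coordinates to freeze. Getting the constant $16$ in the exponent (rather than, say, $2$) requires being careful with discretization error, with the definition of ``tight'' using a small buffer $c_j - 1$ rather than $c_j$, and with the Gaussian martingale's quadratic variation since the walk spends varying amounts of time in each direction; choosing $T$ a modest polynomial factor larger than $1/\eta^2$ and $\eta$ polynomially smaller than $\delta$ gives enough slack to absorb all these lower-order terms.
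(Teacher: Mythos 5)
The paper does not prove Lemma~\ref{lem:pcol}: it is quoted verbatim from Lovett and Meka~\cite{LM15} and used as a black box, so there is no ``paper's own proof'' to compare against. What you have written is a reconstruction of the original Lovett--Meka argument, and at a high level it is faithful: a restricted Gaussian walk in the subspace orthogonal to frozen coordinate directions and tight row directions, sub-Gaussian martingale tails to control the discrepancy of each row and to bound the expected number of rows that ever become tight, and an $\ell_2$-potential argument to force at least $n/2$ coordinates to freeze. Those are indeed the three pillars of the actual proof.

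There are, however, genuine gaps in the details that go beyond the ``lower-order terms'' you flag at the end. First, the \emph{clipping} step you introduce (projecting a coordinate back into $[-1,1]$ if a Gaussian step overshoots) would break both the martingale structure of $M_k^{(j)}$ and the identity $\E[\Phi_{k+1}-\Phi_k\mid x_k]=\eta^2\dim V_k$; Lovett--Meka avoid this entirely by declaring a coordinate frozen as soon as $|x_{k,i}|\ge 1-\delta$ and choosing the step size so small that a single step cannot carry a non-frozen coordinate past $\pm 1$ (with overwhelming probability), so clipping never occurs. Second, your time horizon is internally inconsistent: you set $T=\Theta(1/\eta^2)$ but then invoke ``$\E[\sum_{k<T}\dim V_k]\le n/\eta^2 = O(T)$,'' which only makes sense if $T=\Omega(n/\eta^2)$; in fact the correct mechanism is a \emph{stopping-time} argument --- the potential increases by at least $\eta^2\cdot n/4$ in expectation as long as $\dim V_k\ge n/4$, so within $O(1/\eta^2)$ expected steps the walk reaches a state where $\dim V_k < n/4$, not a statement about the time-average of $\dim V_k$. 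Third, the final counting (``codimension at least $|I_T^c|-n/4$\ldots combined with the potential bound forcing $|I_T^c|\ge n/2+n/4$, we conclude $|I_T^c|\ge n/2$'') is circular as written: what you want is that at the stopping time, codimension of $V$ is $\ge 3n/4$, the number of tight rows is $\le n/4$ (by Markov from the $\le n/8$ expectation you computed), hence the number of frozen coordinates is $\ge n/2$. These are fixable, but as stated the potential/counting step does not close; the skeleton is right, the joints are not.
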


Note that the probability of success can be boosted by running the algorithm multiple times,  and we will assume that the
probability of failure of the algorithm is exponentially small.
When $|x_i| \geq 1-\delta$, we say that variable $i$ is frozen and otherwise it is alive.
Setting $\delta=1/n$, rounding the frozen variables to the nearest $-1$ or $+1$ at the end of the algorithm
can lead to an additional discrepancy of at most  $1$. So we will
assume henceforth that $\delta=0$.

\section*{Related Work}
Very recently, two other groups \cite{RH18,F18} have independently obtained related results. These
results consider the regime where $n \gg m$, and use fourier analytic methods to show
that an $O(1)$ discrepancy can be achieved
for random low degree systems for $n=\Omega(m^2)$ \cite{RH18} and $n=\Omega(m^3)$ \cite{F18}.
Their results are non-algorithmic.


\section{Applying Partial Coloring}
We now prove Theorem \ref{thm:2}. We will in fact show a strengthening of the theorem that gives small discrepancy from any \emph{starting} fractional coloring $x^{(0)}$ as will be required in our reduction from Theorem \ref{thm:1}.
\begin{theorem}
\label{thm:2m}
Let $A$ be a random $m\times k$ matrix where each column has $t$ ones and $x^{(0)} \in [-1,1]^k$. Then, if $t = \Omega(\log^2 (ek/m))$ then with probability at least $1-\exp(-t)$, there exists $\chi \in \{1,-1\}^k$ such that for all rows $v_j$ of $A$, $|\langle v_j, \chi - x^{(0)} \rangle | = O(\sqrt{t})$.
\end{theorem}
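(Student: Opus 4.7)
The plan is to iteratively apply Lemma~\ref{lem:pcol} for $L = O(\log k)$ rounds, starting from $x^{(0)}$ and producing $x^{(0)}, x^{(1)}, \ldots, x^{(L)} = \chi \in \{\pm 1\}^k$, with each round freezing at least half of the currently alive coordinates. By the triangle inequality, $|\langle v_j, \chi - x^{(0)}\rangle| \le \sum_{\ell} |\langle v_j, x^{(\ell)} - x^{(\ell-1)}\rangle| \le \sum_\ell c_j^{(\ell)}\sqrt{r_j^{(\ell)}}$, where $c_j^{(\ell)}$ are the Lovett--Meka parameters at round $\ell$ and $r_j^{(\ell)} = \|v_j^{(\ell)}\|_2^2$ is the weight of row $j$ restricted to the columns still alive at round $\ell$. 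My goal is to choose the $c_j^{(\ell)}$ so that this total sum is $O(\sqrt{t})$ for every row $j$.

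For ``early'' rounds with $n_\ell \ge 16m$ alive coordinates, I would take $c_j^{(\ell)} = 0$ for every $j$, which is admissible because $\sum_j e^{0} = m \le n_\ell/16$; these rounds incur zero discrepancy and merely shrink the alive set by half. For ``late'' rounds with $n_\ell < 16m$, I would equalize contributions by setting $c_j^{(\ell)} = T_\ell / \sqrt{r_j^{(\ell)}}$; then, subject to a uniform bound $r_j^{(\ell)} \le R_\ell$, the constraint $\sum_j \exp(-c_j^2/16) \le n_\ell/16$ becomes $m\exp(-T_\ell^2/(16 R_\ell)) \le n_\ell/16$, forcing $T_\ell = O(\sqrt{R_\ell \log(m/n_\ell)})$. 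Plugging in the expected bound $R_\ell \approx n_\ell t/m$ and writing $s = \log_2(16m/n_\ell)$ gives $T_\ell = O(\sqrt{t\, s\, 2^{-s}})$, and the geometric sum $\sum_{s\ge 0}\sqrt{s}\,2^{-s/2}$ telescopes to $O(\sqrt{t})$.

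The main technical step will be establishing the uniform bound on $R_\ell$, since the alive set at round $\ell$ is adaptively chosen by the algorithm and hence correlated with $A$. To handle this I would prove a structural property of the random matrix that holds with probability at least $1-\exp(-t)$: for every subset $S \subseteq [k]$ and every row $j$, $r_j^S \le C(|S|t/m + \log(ek/|S|) + \log m)$. This follows from Chernoff's upper tail for $r_j^S \sim \mathrm{Binom}(|S|, t/m)$ combined with a union bound over $\binom{k}{|S|}$ subsets, $m$ rows, and dyadic scales of $|S|$; the entropy of the subset choice is exactly what contributes the additive $|S|\log(ek/|S|)$ term. Substituting this slack into the per-round formula produces an additional additive contribution of order $\sqrt{\log(ek/m)}$ per late round, and summing it over the $O(\log m)$ late rounds keeps the total error within $O(\sqrt{t})$ precisely when $t = \Omega(\log^2(ek/m))$.

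The hard part will be making the union bound tight enough that this entropy cost does not overwhelm the target $\sqrt{t}$. A naive application at each scale $n_\ell$ separately tends to leak an extra $\log m$ factor; the careful argument must tailor the $c_j^{(\ell)}$ values to the actual distribution of row weights at round $\ell$, so that heavy and light rows split the LM budget appropriately and the slack at each scale is absorbed by the matching geometric term. This delicate balance is exactly what pins the threshold at $t = \Omega(\log^2(ek/m))$, which under the choice $k = O(m \log^2 m)$ from Theorem~\ref{thm:1} matches the $(\log\log m)^2$ hypothesis of Theorem~\ref{thm:main}.
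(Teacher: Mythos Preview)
Your high-level plan (Lovett--Meka iterations with round budgets $T_\ell$ summing to $O(\sqrt t)$) matches the paper's, but the central structural lemma you rely on is false, and without it the argument does not close.

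You assert that with probability $1-\exp(-t)$, for \emph{every} subset $S\subseteq[k]$ and \emph{every} row $j$, $r_j^S \le C\bigl(|S|t/m + \log(ek/|S|) + \log m\bigr)$. This cannot hold. Fix any row $j$; over all $k$ columns it contains about $kt/m$ ones, so an adversary can pick $S$ to be $|S|$ of those columns and make $r_j^S = |S|$. For, say, $|S|=m$ and $t\ll m$ this is vastly larger than your claimed bound $C(t+\log(ek/m)+\log m)$. The underlying issue is the union bound: there are $\binom{k}{|S|}=\exp\bigl(|S|\log(ek/|S|)\bigr)$ subsets, so the entropy cost in the Chernoff exponent is $|S|\log(ek/|S|)$, not $\log(ek/|S|)$, and this forces the deviation allowance to scale linearly with $|S|$, destroying the ``$R_\ell\approx n_\ell t/m$'' input to your $T_\ell$ computation. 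Even granting your bound, the residual $+\log m$ in $R_\ell$ makes each late round cost $\Omega(\sqrt{\log m})$, and summing over $O(\log m)$ rounds gives $\Omega(\log m)$ rather than $O(\sqrt t)$; your claim that this sum stays $O(\sqrt t)$ under $t=\Omega(\log^2(ek/m))$ is a miscalculation (with $k=m\log^2 m$ the right-hand side is only $\Theta(\log\log m)$).

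The paper avoids both problems by never asking for a uniform row bound. In each Phase~1 round it fixes a size threshold $s_0=\Theta(t/i^5)$, handles \emph{small} rows analytically via $\sum_j\exp(-c_j^2/16)\le m\exp(-d^2/16s_0)\le n_i/32$, and then shows that the \emph{number} of rows exceeding $s_0$ is at most $n_i/32$; bounding how many rows of an $\ell$-column submatrix are heavy survives the $\binom{k}{\ell}\binom{m}{r}$ union bound because the Chernoff exponent is $\Omega(r s_0)$, which matches the entropy. For $i>\log t$ the paper switches to a different regime (Phase~2: $c_j=0$ for rows of size $>ct^{1/2}$, $c_j=\infty$ otherwise), again only needing that \emph{few} rows are large. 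Your final paragraph gestures at ``heavy and light rows splitting the LM budget,'' which is exactly this idea, but the proposal as written commits to the uniform bound and hence does not constitute a proof.
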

\subsection{The Algorithm}
Our input consists of the
random matrix $A^{(0)}$ of at most $n_0 \leq k$ columns, and some fractional coloring $x^{(0)} \in [-1,1]^{n_0}$.

We will apply the partial coloring lemma in several iterations, where at least half the remaining variables become frozen at each iteration. At the beginning of  iteration $i$,
let $n_i$ denote the number of alive variables and let $A^{(i)}$ and $x^{(i)}$ denote the matrix and fractional coloring restricted to those columns.
We use $j$ to index the rows.

The iterations of the algorithm can be divided into three different phases: (i) $i=0$,  (ii) $1 \leq i \leq \log t$ and (iii) $i > \log t$. We now describe each of these phases.

\begin{enumerate}
\item {\em Phase 0.}
Here the input is $A^{(0)}$ and the starting coloring $x^{(0)}$.
We reduce the number of fractional variables to $n_1 \leq m$, by picking any basic feasible solution to the linear program
\[A^{(0)}x=A^{(0)} x^{(0)}  \qquad \text{ subject to } -1 \leq x_i \leq 1 \quad\text{for } i\in [n_0]\]
As $A^{(0)}x=A^{(0)} x^{(0)}$ consists of at most $m$ linearly independent  constraints, the solution will have at most $m$ variables that are not set to $-1$ or $1$.

Note that the resulting matrix $A^{(1)}$ is no longer random. Nevertheless, we will be able to argue that we can still bound the potential required in the partial coloring lemma as will be shown via Lemma \ref{lem:prob}. 

For notational clarity, in subsequent iterations we can assume that $n_i = m 2^{1-i}$  (as $A^{(1)}$ is no longer random we can add columns with all entries $0$ if necessary).

\item {\em Phase 1.}
 For each $i=1,\ldots, \log t$, in iteration $i$ we apply the algorithm in Lemma \ref{lem:pcol} to $A^{(i)}$ and $x^{(i)}$ with the discrepancy bound
\[c_j \|v_j\|_2 = ct^{1/2}/i^2,\]
 where $c$ is some fixed constant that will be specified later.
If the condition $ \sum_{j=1}^m \exp(-c_j^2/16) \leq n_i/16$ in Lemma \ref{lem:pcol} is not satisfied, we
declare fail and abort the algorithm.

If the algorithm does not abort in any iteration, clearly the over all discrepancy during these phases is $ct^{1/2} \sum_i i^{-2} = O(t^{1/2})$.
\item {\em Phase 2.} For $i> \log t$, we apply partial coloring
with $c_j=0$ for sets larger than $ct^{1/2}$
and $c_j = \infty$ otherwise.
Again, the algorithm aborts if $\sum_{j=1}^m \exp(-c_j^2/16) \leq n_i/16$ does not  hold during any iteration.

Assuming the algorithm does not abort, this phase also adds at most $O(t^{1/2})$ discrepancy as a set incurs zero discrepancy as long as its size exceeds $O(t^{1/2})$.
 \end{enumerate}

It is clear by the description of the algorithm that the total discrepancy of any set is $O(t^{1/2})$. So our goal will be to show that
the probability that the algorithm aborts is at most $\exp(-t)$.
If the algorithm aborts, then we simply output the $O(t)$ discrepancy coloring given by the Beck-Fiala Theorem \cite{BF81}. Clearly, the expected discrepancy
of the resulting algorithm is $ O(t \exp(-t)) + O(t^{1/2}) = O(t^{1/2})$.

\subsection{Analysis}
We begin with a simple lemma that we will use repeatedly later.
 \begin{lemma}
\label{lem:prob}
 Let $M$ be some fixed $r \times \ell$ submatrix of an $m \times \ell$ random matrix $A$ where each column has $t$ ones.
 For  $s  \geq 2t \ell/m$,  let $B(s)$ denote the event that each row of $M$ contains at least $s$
 $1$'s. Then over the random choice of $A$,
 \[\Pr[B(s)] \leq \exp( - rs \log (sm/t\ell)/4)   \]
 \end{lemma}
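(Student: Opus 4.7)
My plan is to view $B(s)$ as forcing a lower bound on the total number of ones inside $M$ and then apply a Chernoff-type tail bound. Let $R \subseteq [m]$ with $|R| = r$ be the fixed row indices defining $M$, and for each column $c \in [\ell]$ let $S_c \subseteq [m]$ denote the (random) $t$-subset of rows in which column $c$ has its ones. Since the columns of $A$ are chosen independently, the intersection sizes $X_c := |S_c \cap R|$ are independent, each hypergeometrically distributed with parameters $(m, r, t)$ and mean $tr/m$. I would then set $Z := \sum_{c=1}^\ell X_c$, the total number of ones in $M$, so that $\mu := \E[Z] = \ell t r/m$. The event $B(s)$ forces each of the $r$ rows of $M$ to contain at least $s$ ones, hence $B(s) \subseteq \{Z \geq rs\}$ and it suffices to bound $\Pr[Z \geq rs]$.

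Next I would invoke the standard multiplicative Chernoff inequality for sums of independent hypergeometric random variables; this is legitimate because, by Hoeffding's classical observation, the moment generating function of a hypergeometric is dominated by that of the corresponding binomial, so the usual $\Pr[Z \geq k\mu] \leq \exp(-\mu(k \ln k - k + 1))$ remains valid. Setting $k := rs/\mu = sm/(t\ell)$, the hypothesis $s \geq 2 t \ell / m$ gives $k \geq 2$, so
\[
\Pr[Z \geq rs] \;=\; \Pr[Z \geq k\mu] \;\leq\; \exp\bigl(-\mu(k\ln k - k + 1)\bigr).
\]

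Finally, I would dispatch a small numerical check: for $k \geq 2$, $k \ln k - k + 1 \geq \tfrac14 k \ln k$. This is equivalent to $f(k) := \tfrac34 \ln k - 1 + 1/k \geq 0$, and one verifies that $f'(k) = (3k-4)/(4k^2)$ is positive on $[2,\infty)$ while $f(2) = \tfrac34 \ln 2 - \tfrac12 > 0$, so the inequality holds throughout. Substituting this bound together with $\mu k = rs$ gives
\[
\Pr[B(s)] \;\leq\; \Pr[Z \geq rs] \;\leq\; \exp\!\bigl(-\tfrac14 \, rs \, \ln(sm/(t\ell))\bigr),
\]
which is exactly the stated estimate. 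I do not anticipate a real obstacle here: the argument is essentially a one-shot Chernoff bound, and the only two ingredients beyond setting up notation are the applicability of Chernoff to hypergeometric sums (standard via Hoeffding's MGF comparison) and the elementary inequality $k\ln k - k + 1 \geq \tfrac14 k\ln k$ for $k \geq 2$.
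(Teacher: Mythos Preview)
Your proposal is correct and follows essentially the same approach as the paper: both reduce $B(s)$ to the event $\{\sum_c X_c \geq rs\}$ for the independent hypergeometric column counts, invoke the hypergeometric-to-binomial MGF comparison (Hoeffding) to justify a Chernoff bound, and then extract the factor $\tfrac14 rs\log(sm/t\ell)$ from the exponent. The only cosmetic difference is that the paper uses the Chernoff form $\exp(-\mu\delta\log(1+\delta)/2)$ together with $\delta \geq (1+\delta)/2$ when $1+\delta\geq 2$, whereas you use the equivalent form $\exp(-\mu(k\ln k - k + 1))$ and the elementary inequality $k\ln k - k + 1 \geq \tfrac14 k\ln k$ for $k\geq 2$; both routes yield the identical bound.
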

 \begin{proof}
For $i\in [\ell]$, let $X_i$ denote the number of $1$'s in column $i$ of $M$.  Each $X_i$ is independent and has the hypergeometric distribution $H(m,t,r)$ with mean $\E[X_i]=tr/m$.
Using the fact that $H(m,t,r)$ is more sharply concentrated around its mean than the corresponding binomial distribution $\text{Bin}(r,p)$ with $p=t/m$ (\cite{FK15}, page 395), we can bound the upper tail of $\sum_{i=1}^{\ell} X_i$ by the upper tail of $\text{Bin}(r\ell,p)$.

Moreover, as $B(s)$ implies that $\sum_{i=1}^\ell X_i \geq rs$, we  have that $\Pr[B(s)] \leq \Pr[\sum_{i=1}^\ell X_i  \geq rs]$.
By standard Chernoff bounds, with $\mu = pr\ell$ and for any $\delta>0$
\[\Pr[ \text{Bin}(r\ell,p) \geq (1+\delta) \mu ] \leq \exp(-(\mu \delta \log(1+\delta))/2) \]
Setting $(1+\delta) = s/(p\ell) = sm/t\ell$ and as $\delta \geq sm/(2t\ell)$ by our assumption that $sm/t\ell \geq 2$,
\[ \Pr[B(s) \leq \exp( - rs \log (sm/t\ell)/4).\]
\end{proof}

%

Let us first analyze the failure probability in phase $2$.
\begin{lemma} The probability that the algorithm fails during phase 2 is at most $\exp(-t)$.
\end{lemma}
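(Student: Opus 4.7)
The plan is to union bound the failure event at each phase-2 iteration over the possible identity of the alive-column set and a putative witness set of heavy rows, applying Lemma \ref{lem:prob} to each fixed pair.

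Concretely, failure at iteration $i$ of phase 2 means more than $n_i/16$ rows of $A^{(i)}$ are \emph{heavy}: they have strictly more than $s:=ct^{1/2}$ ones among the $\ell := n_i$ alive columns. A useful preliminary observation is that a row of an $\ell$-column submatrix has at most $\ell$ ones, so no row is heavy unless $\ell > s$; this restricts attention to $O(\log(m/t^{3/2})) = O(\log m)$ phase-2 iterations, those with $\ell \in [s, m/t]$. For each such iteration I would union bound over the alive-column set $S \subseteq [k]$ of size $\ell$ and over a candidate set $T \subseteq [m]$ of would-be heavy rows of size $r := \lceil \ell/16 \rceil + 1$. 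For fixed $(S,T)$, Lemma \ref{lem:prob} bounds $\Pr[\text{all rows of }T\text{ are heavy in }S]$ by $\exp(-rs\log(sm/(t\ell))/4)$; and since $\ell \leq m/t$, the log factor is at least $\tfrac{1}{2}\log t$. Together with $\binom{k}{\ell}\binom{m}{r} \leq (ek/\ell)^\ell (16em/\ell)^{\ell/16}$, the per-iteration log-failure is at most
\[
\ell\left[\log\tfrac{ek}{\ell} + \tfrac{1}{16}\log\tfrac{16em}{\ell} - \tfrac{ct^{1/2}}{64}\log\tfrac{cm}{t^{1/2}\ell}\right].
\]

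What remains is to verify that for $c$ a sufficiently large absolute constant this is at most $-(t+\log\log m)$ uniformly over $\ell \in [ct^{1/2},m/t]$; summing over the $O(\log m)$ iterations then yields the claimed $\exp(-t)$ bound. Expanding $\log(ek/\ell) = \log(k/m) + \log(m/\ell) + O(1)$ and $\log(cm/(t^{1/2}\ell)) = \log(m/\ell) - \tfrac{1}{2}\log t + O(1)$, one sees that for $c$ large the coefficient of $\log(m/\ell)$ becomes strongly negative, while $\log(m/\ell) \geq \log t$ throughout the phase-2 range. The remaining positive contributions, $\log(k/m)$ and a term of order $t^{1/2}\log t$, are absorbed by the hypothesis $t = \Omega(\log^2(ek/m))$ (equivalently $t^{1/2} = \Omega(\log(k/m))$), and the outer factor $\ell \geq ct^{1/2}$ provides the additional $-t$ slack after multiplying through. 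The main technical difficulty is just balancing the constants so the estimate holds uniformly at both extremes of $\ell$ --- the tightest being $\ell = m/t$, where the column entropy $\log(ek/\ell) = \log(ek/m)+\log t$ is largest and the log factor in Lemma \ref{lem:prob} is smallest --- but once $c$ is chosen sufficiently large this is routine algebra.
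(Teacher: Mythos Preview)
Your proposal is correct and follows essentially the same approach as the paper: union bound over the $\binom{k}{\ell}$ choices of alive columns and $\binom{m}{r}$ choices of heavy rows, apply Lemma~\ref{lem:prob} with $s=ct^{1/2}$ and $r\approx\ell/16$, and then do the algebra using $\log(ek/m)=O(t^{1/2})$. The only cosmetic differences are that the paper uses the cruder bound $\binom{k}{\ell}\binom{m}{r}\le(ek/\ell)^{2\ell}$ and obtains the per-iteration exponent $-\Omega(\ell t^{1/2})$, summing geometrically over $\ell=m2^{1-i}$ down to $\ell\approx t^{1/2}$; you instead aim for a uniform per-iteration bound of $\exp(-(t+\log\log m))$ and sum over $O(\log m)$ iterations, which is equivalent.
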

\begin{proof}
Consider some iteration $i$ for $i>\log t$. Let $\ell = n_i = m2^{1-i}$. The iteration $i$ aborts if the number of rows with size $s > ct^{1/2}$ exceeds $\ell/16$. Call such rows big and let $r=\ell/16$.

By Lemma \ref{lem:prob} and a union bound  such $r \times \ell$ submatrices of $A^{(0)}$, this probability is at most
\[ \binom{k}{\ell} \binom{m}{r}  \cdot \Pr[B(s)]
\leq   \left( \frac{ek}{\ell} \right)^{2 \ell}  \cdot \exp(-\ell s/64 \log (sm/t\ell)). \]
Let us define the parameter $\gamma = ek/m$. Writing $ek/\ell = \gamma (m/\ell)$  and assuming $c \geq 256$ (and hence $s/128 \geq 2t^{1/2}$, this is at most
\begin{eqnarray*}
& & \exp (2 \ell   ( \log \gamma + \log m/\ell - s/128 -  \log sm/t\ell))\\
& \leq &  \exp (-2 \ell (2t^{1/2} - \log \gamma -  \log t))
\end{eqnarray*}
By the assumption in Theorem \ref{thm:2} that $\log \gamma \leq t^{1/2}$, this is at most $\exp(-\ell t^{1/2})$.
As $\ell = m 2^{1-i}$ in iteration $i$ and as the phase becomes trivial when $\ell \leq 2t^{1/2}$, the over all probability of failure is  at most $e^{-t}$.
\end{proof}

We now analyze the failure probability during the iterations of phase 1.

\begin{proof}
Let us fix an iteration $i$.
We denote the discrepancy bound by $d=d_i$ where $d_i = ct^{1/2}/i^2$,  and let $\ell = m^{1-i}$ denote the number of variables.
For a row $j$ of size $s$, note that $c_j= d/\sqrt{s}$.
We call a row small if its size $s \leq s_0$, where $s_0 = d^2/(16ci^5) = ct/(16i^5)$.

The contribution of small rows to the sum $\sum_{j=1}^m \exp(-c_j^2/16)$ is at most
\[m \exp(-d^2/16s_0)  = m \exp(-ci) \leq  m \exp(-5i) \leq \ell/32\]

It remains to show that with high probability that contribution of large rows to $\sum_{j=1}^m \exp(-c_j^2/16)$ is also at most $\ell/32$. To this end, we conservatively assume that $c_j=0$ for big rows and hence we only need
to bound the probability that there are more than $\ell/32$ rows.

If we pick $\ell$ columns from the random matrix $A^{(0)}$, the expected row size is  $\mu = t \ell/m = t2^{1-i}$.
As $s_0 = ct/16i^5$ and $\mu = t 2^{1-i}$, we can pick $c$ large enough so that $ s_0 \geq 2\mu$.
By Lemma \ref{lem:prob}, the probability of having more than $\ell/32$ rows of size at least $s_0$ is bounded by
\begin{eqnarray*}
 \binom{k}{\ell} \binom{m}{\ell/32}   \Pr[B(s_0)]
& \leq &   \left(\frac{ek}{\ell} \right)^{2 \ell}  \cdot \exp(-\ell s_0 \log (s_0m/t\ell)/128) \\
& \leq & \exp(2 \ell (\log \gamma + \log m/\ell) - \ell s_0 \log (s_0 m/t\ell)/128) \\
& \leq & \exp(2 \ell (\log \gamma + i) - \ell \cdot \Omega(t/i^4)).
\end{eqnarray*}
As  $\log \gamma  = O(t^{1/2})$,  and as $i \geq \log t$ we have $\ell = m 2^{1-i} \geq m/t$, this gives an over all failure probability of $\exp(-\Omega(m/\log^4 t))$.
\end{proof} 
\section{Reducing the number of columns}
In this section we prove Theorem \ref{thm:1}.
\subsection{Fractional Discrepancy Polytope}
Let $a_1,\ldots,a_k$ be arbitrary vectors in $\R^m$.
Consider the polytope $ P:= \left\{ \sum_{i=1}^k a_i x_i : \  x_i \in [-1,1]\right\}$
of discrepancy vectors obtained by all possible fractional colorings. The convex hull of $P$ is given by its $2^k$ extreme points
\[p_\chi := \sum_i \chi_i a_i \text{ for }\chi=(\chi_1,\ldots,\chi_k) \in \{-1,+1\}^k.\]

For $p\geq 1$, let $B_p^m = \{y \in \R^m = \|y\|_p \leq 1\}$ denote the $\ell_p$ ball in $\R^m$. For brevity, let $Q := 2t B^m_\infty$.
The following lemma characterizes exactly when  $Q \subset P$.

\begin{lemma} Let $A \in \R^{m \times k}$ be the matrix with columns given by $a_1,\ldots,a_k$. Then, $Q \subset P$ iff $\|y^T A\|_1 > 2t$, for all $y \in B_1^m$.
\end{lemma}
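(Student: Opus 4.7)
The plan is to argue this as a standard instance of the support function characterization of convex bodies, combined with LP duality. Observe that $P$ is the image of the cube $[-1,1]^k$ under the linear map $x \mapsto Ax$, hence a closed convex set symmetric about the origin; the set $Q = 2t B_\infty^m$ clearly has the same properties. For any two closed convex bodies, $Q \subseteq P$ is equivalent to the support function inequality $h_Q(y) \leq h_P(y)$ holding for every direction $y \in \R^m$, where $h_K(y) = \sup_{p \in K} \langle y, p\rangle$. The ``only if'' direction of this criterion is immediate from the definition, while the ``if'' direction follows from the separating hyperplane theorem: any hypothetical $q \in Q \setminus P$ could be strictly separated from $P$, yielding a direction $y$ that violates $h_Q(y) \leq h_P(y)$.

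The main step is then to identify each support function in closed form. For $P$, since the maximum of a linear functional over a product of intervals decouples coordinate-wise,
\[
h_P(y) \;=\; \sup_{x \in [-1,1]^k} \langle y, Ax \rangle \;=\; \sup_{x \in [-1,1]^k} \langle A^T y, x \rangle \;=\; \|A^T y\|_1 \;=\; \|y^T A\|_1,
\]
attained at $x_i = \mathrm{sign}(\langle y, a_i\rangle)$. This is the LP-duality step. For $Q$, the standard $\ell_\infty$--$\ell_1$ duality gives
\[
h_Q(y) \;=\; \sup_{\|q\|_\infty \leq 2t} \langle y, q\rangle \;=\; 2t\,\|y\|_1.
\]
Combining, $Q \subseteq P$ is equivalent to $\|y^T A\|_1 \geq 2t\,\|y\|_1$ for every $y \in \R^m$. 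Both sides are positively homogeneous in $y$, so this is in turn equivalent to the same inequality on the unit $\ell_1$ sphere $\{y : \|y\|_1 = 1\}$, which is the statement of the lemma.

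There is no real obstacle here. The two points worth checking are that $P$ is genuinely closed (so the support function criterion applies without taking closures), which holds because $P$ is the continuous image of a compact set, and that the strict vs.\ non-strict ``$>$'' in the stated condition is immaterial: the infimum of $\|y^T A\|_1$ on the compact set $\{y : \|y\|_1 = 1\}$ is attained, and since the interior/boundary distinction for $Q \subset P$ only loses a measure-zero set, one can pass between the strict and non-strict formulations freely by continuity and homogeneity.
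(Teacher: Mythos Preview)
Your proof is correct and takes essentially the same approach as the paper: both use the separating hyperplane theorem (the paper phrases it as Farkas' lemma, you phrase it via support functions) to reduce containment to the inequality $h_Q(y) \le h_P(y)$, and both compute $h_P(y) = \|y^T A\|_1$ and $h_Q(y) = 2t\|y\|_1$ in the same way. Your presentation is slightly cleaner in explicitly noting closedness of $P$ and the strict-versus-nonstrict issue, but the underlying argument is identical.
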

\begin{proof}
Suppose $Q\not\subset P$. As $P$ and $Q$ are convex, by Farkas' lemma,
there exists a hyperplane given by normal $y$, that separates some point $q \in Q\setminus P$ from $P$.
As $0\in P$,  we can assume that there is some $s>0$ such that $y^T q > s$ and $y^T p_\chi < s$ for each extreme point $p_\chi$ of $P$.
As
\begin{equation}
\label{eq:l1} \max_{\chi \in \{-1,1\}^k} y^T p_\chi  =  \max_{\chi \in \{-1,1\}^k}  \sum_{i=1}^k   (y^T a_i) \chi_i = \sum_{i=1}^k |y^T a_i| = \| y^T A \|_1,
\end{equation}
this is same as saying that there is some $s$ such that  \[y^T q >  s > \| y^T A \|_1.\]
By scaling, we can assume that $\|y\|_1 =1$ and as $Q = 2t B_\infty$ we have $\max_{q \in Q}  y^Tq = 2t$. This gives that $s < 2t$, and thus there is some $y$ with $\|y\|_1=1$ and $\|y^TA\|_1 < 2t$; a contradiction.

Conversely if $Q \in P$, then no direction exists that separates some point $q\in Q$ from $P$, which implies for each $y$ with $\|y\|_1$, there is some $p_\chi \in P$ such that $y^T p_\chi > 2t$, which by \eqref{eq:l1} gives that $\|y^T A \|_1 >  2t$.
\end{proof}


So to prove Theorem \ref{thm:1}, it suffices to show that with high probability,  $\|y^TA\|_1 > 2t$ for every $y \in B_1^m$.

%
%

\subsection{Counting Argument: Weak Bound}
We start by sketching a simple but weak bound of $k = O(m^4 \log m)$.
Together with Theorem \ref{thm:2} this would already imply the $O(t^{1/2})$ discrepancy bound in Theorem \ref{thm:main}, but under the condition that $t = \Omega(\log^2 m)$.

\begin{theorem}
\label{thm:boundk}
Let $A$ be a $m\times k$ random matrix where each column has $t$ ones and $k = O(m^4 \log m)$. Then with probability at least
$1-\exp(-m \log m)$,
it holds that $\|y^T A\|_1 > 2t$ for all $y \in B_1^m$ with $\|y\|_1=1$.
\end{theorem}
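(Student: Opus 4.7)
The plan is a standard $\epsilon$-net plus concentration argument on the $\ell_1$ unit sphere. By the preceding LP-duality lemma, it suffices to show that $\|y^TA\|_1 > 2t$ holds simultaneously for every $y$ with $\|y\|_1=1$ with the claimed probability. Fix such a $y$ and let $Z_i = y^T a_i$. Since $|Z_i| \le \|y\|_1 \|a_i\|_\infty = 1$, one has $\E[|Z_i|] \ge \E[Z_i^2] = \mathrm{Var}(Z_i) + (\E Z_i)^2$. Writing $s = \sum_j y_j$, the hypergeometric moments give $\E Z_i = (t/m)s$ and $\mathrm{Var}(Z_i) = \tfrac{t(m-t)}{m^2(m-1)}\bigl(m\|y\|_2^2 - s^2\bigr)$. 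Splitting by whether $s^2 \le m\|y\|_2^2/2$ or not, and using $\|y\|_2 \ge 1/\sqrt{m}$ (Cauchy--Schwarz) throughout, one obtains the uniform bound $\E[|Z_i|] \ge c_0\, t/m^2$ for a constant $c_0>0$. Hence $\mu_y := \E[\|y^TA\|_1] = k\, \E[|Z_i|] = \Omega(m^2 t \log m)$ when $k = Cm^4 \log m$. Bernstein's inequality (Lemma \ref{lem:bernstein}) applied to the i.i.d.\ summands $|Z_i| \in [0,1]$, whose individual variances are bounded by $\E[|Z_i|]$, then gives
\[\Pr\bigl[\|y^T A\|_1 \le \mu_y/2\bigr] \le \exp(-\Omega(\mu_y)) \le \exp(-\Omega(m^2 t \log m)).\]

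Next I take an $\epsilon$-net $\mathcal{N}$ of the $\ell_1$ sphere at resolution $\epsilon = t/k$, with $|\mathcal{N}| \le (3/\epsilon)^m = \exp(O(m \log m))$ by standard volumetric packing. Union-bounding the previous tail estimate over $\mathcal{N}$ (choosing the constant $C$ in $k = Cm^4 \log m$ large enough that the Bernstein exponent dominates $m\log m$ plus the net exponent) shows that with probability at least $1 - \exp(-m\log m)$ every $z \in \mathcal{N}$ satisfies $\|z^TA\|_1 > 3t$. Since $\sum_i |(y-z)^Ta_i| \le \sum_i \|y-z\|_1 \|a_i\|_\infty = k\|y-z\|_1$, the map $y \mapsto \|y^TA\|_1$ is $k$-Lipschitz in the $\ell_1$ metric; hence any $y$ with $\|y\|_1=1$ and nearest net point $z$ satisfies $\|y^TA\|_1 \ge \|z^TA\|_1 - k\epsilon > 3t - t = 2t$, completing the proof.

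The main obstacle is the uniform pointwise lower bound $\E[|Z_i|] \ge c_0\, t/m^2$: this estimate is very crude (for spread $y$ the true value is of order $\sqrt{t/m}\,\|y\|_2 \ge \sqrt{t}/m$, much larger) but the generous $k = m^4 \log m$ easily absorbs the slack. Achieving the sharp $k = O(m \log^2 m)$ bound promised later in the paper presumably requires a substantially tighter Khintchine-type anti-concentration estimate for sparse hypergeometric sums together with a more carefully chosen net; in the present weak regime it is enough to split into the two cases above and apply Bernstein.
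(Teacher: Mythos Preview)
Your proof is correct and follows essentially the same approach as the paper: an $\epsilon$-net on $B_1^m$, the pointwise lower bound $\E|y^Ta| \ge \E[(y^Ta)^2] = \Omega(t/m^2)$ via the hypergeometric second moment, Bernstein's inequality for concentration, and a Lipschitz extension from the net to all $y$. The paper's derivation of the second-moment lower bound is marginally more direct---it simply drops the nonnegative $s^2$ cross term under the standing assumption $t \le m/10$ (which you also implicitly need in your small-$s^2$ case for the $(m-t)$ factor to be $\Omega(m)$, but do not state)---thereby avoiding your case split on $s^2$ versus $m\|y\|_2^2$.
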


Let $\delta>0$, and let $N_\delta$ be the set of points $ y' \in \R^m$ such that each coordinate $y'_i$ of $y'$ is an integral multiple of $\delta$, and $\|y'\|_1 \leq 1$.
Clearly $|N| \leq (3/\delta)^m$ and for any point $y \in B^m_1$ there is some point $y'$ in $N_\delta$ with $|y-y'|\leq m\delta$.

We fix $\delta = 3/km$ and note that $|N_\delta | \leq \exp(m  \log (km))$.
As  $\|(y-y')^T a \| \leq \|y - y'\|_1 \|a\|_\infty \leq m \delta$ for any $a \in \R^m$, to show Theorem \ref{thm:boundk} it suffices to show that
$\|y^T A\|_1 > 2t  +  km \delta$ for all $y \in N_\delta$ with $\|y\|_1 \geq 1 - m\delta \geq 9/10$.

Fix a vector $y$ in the net $N_\delta$ with $\|y\|_1 \geq 9/10$. Let $X$ denote the random variable $|y \cdot a|$, where $a \in \{0,1\}^m$ is chosen randomly with exactly  $t$ ones.
We assume henceforth that $t \leq m/10$, as for $t=\Theta(m)$, an $O(m^{1/2})$ discrepancy (even in the non-random case) follows from the result of Spencer \cite{Spencer85}.

\begin{lemma}
For every $y \in B_1^m$, $\E[X] \geq t/2m^2$ and $\E[X^2] \leq 2t/m$, assuming $t \leq m/10$.
\end{lemma}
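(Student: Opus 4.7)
The plan is to compute $\E[X^2]$ essentially exactly using the joint distribution of the indicator coordinates $a_i = \mathbf{1}[i \in a]$. Since $a$ is a uniform random $t$-subset of $[m]$, we have $\E[a_i] = t/m$ and $\E[a_i a_j] = t(t-1)/(m(m-1))$ for $i \ne j$. Expanding,
\[
\E[X^2] \;=\; \sum_i y_i^2 \,\E[a_i] + \sum_{i\ne j} y_i y_j \,\E[a_i a_j]
\;=\; \tfrac{t}{m}\|y\|_2^2 \;+\; \tfrac{t(t-1)}{m(m-1)}\bigl((\textstyle\sum_i y_i)^2 - \|y\|_2^2\bigr).
\]

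For the upper bound, the second term has coefficient bounded by $\tfrac{t(t-1)}{m(m-1)} \le 2t^2/m^2$ (say for $m \ge 2$), and $(\sum_i y_i)^2 \le \|y\|_1^2 \le 1$, while $\|y\|_2^2 \le \|y\|_1^2 \le 1$. Plugging in and using $t \le m/10$ to write $t^2/m^2 \le (1/10)(t/m)$ gives $\E[X^2] \le t/m + O(t^2/m^2) \le 2t/m$, as claimed.

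For the lower bound on $\E[X] = \E[|y\cdot a|]$, I would use the ``reverse Markov'' trick: since $|y \cdot a| \le \|y\|_1 \le 1$ pointwise, one has $X^2 \le \|X\|_\infty \cdot |X|$, so $\E[|X|] \ge \E[X^2]/\|y\|_1 \ge \E[X^2]$. It therefore suffices to lower bound the second moment. Dropping the (possibly negative) off-diagonal term in the worst case,
\[
\E[X^2] \;\ge\; \tfrac{t}{m}\|y\|_2^2 \;-\; \tfrac{t(t-1)}{m(m-1)}\|y\|_2^2 \;=\; \tfrac{t}{m}\|y\|_2^2\Bigl(1-\tfrac{t-1}{m-1}\Bigr) \;\ge\; \tfrac{9}{10}\cdot\tfrac{t}{m}\|y\|_2^2,
\]
where the final step uses $t \le m/10$. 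Combining with the elementary inequality $\|y\|_2^2 \ge \|y\|_1^2/m$ (Cauchy--Schwarz) and the contextual bound $\|y\|_1 \ge 9/10$ (inherited from the net construction preceding the lemma) gives $\E[X^2] \ge 0.9 \cdot (81/100)\cdot t/m^2 \ge t/(2m^2)$, and hence $\E[|X|] \ge t/(2m^2)$.

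The only real subtlety is the off-diagonal term $\tfrac{t(t-1)}{m(m-1)}\bigl((\sum_i y_i)^2 - \|y\|_2^2\bigr)$, whose sign depends on $y$: in the upper bound one simply uses $(\sum y_i)^2 \le 1$, and in the lower bound one discards it after checking that the $(t-1)/(m-1) \le 1/10$ factor keeps it dominated by the diagonal contribution $(t/m)\|y\|_2^2$. This is where the hypothesis $t \le m/10$ is essential; I would note that in the complementary regime $t = \Theta(m)$ the overall Beck--Fiala question is already settled by Spencer's bound and can be excluded as indicated in the text.
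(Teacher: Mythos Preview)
Your proof is correct and follows essentially the same approach as the paper's: both compute $\E[X^2]$ via the joint moments $\E[a_i]=t/m$, $\E[a_ia_j]=t(t-1)/(m(m-1))$, use the pointwise bound $X\le 1$ to get $\E[X]\ge\E[X^2]$, and then bound $\|y\|_2^2$ via Cauchy--Schwarz. The only cosmetic difference is that the paper regroups the second moment as $\tfrac{t(m-t)}{m(m-1)}\|y\|_2^2+\tfrac{t(t-1)}{m(m-1)}(\sum_i y_i)^2$, where both summands are manifestly nonnegative, so the lower bound follows by simply dropping the second term rather than by your case analysis on the sign of $(\sum_i y_i)^2-\|y\|_2^2$; the two computations are algebraically identical.
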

\begin{proof}
As  $0\leq X \leq 1$, $\E[X] \geq \E[X^2]$, so it suffices to lower bound the second moment as follows.
\begin{eqnarray*} \E[X^2] & = &   \E [(\sum_i a_i y_i)^2] = \sum_i \E[a_i] y_i^2 + \sum_{i\neq j}E[a_i a_j] y_i y_j
   =  \frac{t}{m} \sum_i y_i^2  + 2 \sum_{i\neq j} \frac{t(t-1)}{m(m-1)} y_i y_j  \\
& =  & \frac{t(m-t)}{m(m-1)} \sum_i y_i^2 +  \frac{t(t-1)}{m(m-1)}  (\sum_i y_i)^2
 \geq  t/(2m) \sum_i y_i^2 \geq \frac{t}{2m^2}.
\end{eqnarray*}
Similarly, using  $(\sum_i y_i)^2 \leq 1$,
\[\E[X^2] = \frac{t}{m} \sum_i y_i^2 + \frac{t(t-1)}{m(m-1)} (\sum_i y_i)^2 \leq \frac{2t}{m}\].
\end{proof}

As $\|y^T A\|_1$ is the sum of $k$ independent random variables $X_1,\ldots,X_k$ distributed as $X$, using the lower bound on $\E[X]$ and upper bound on $\E[X^2]$ (and hence on the variance), by Lemma \ref{lem:bernstein},
\[ \Pr[ X_1+\ldots + X_k < k \frac{t}{2m^2}  -  z ] \leq \exp(\frac{- z^2/2}{(2tk/m + z/3)})  \]
Setting $k = 2z m^2/t$  and  $z=10ctm^2 \log m$, this gives
\[ \Pr[ X_1 + \ldots X_k < z ] \leq \exp(-z/10m) = \exp(-c m t \log m) \]
As $z \gg 4t$ and choosing $c$ large enough, we have
\[ |N_\delta| \exp(-c m t \log m) \ll \exp(m (\log km - tc \log m))   \ll \exp(-m \log m),\]
we obtain Theorem \ref{thm:boundk}.

\subsection{A Stronger bound}
The bound of $k = O(m^4 \log m)$ in Theorem \ref{thm:1}, combined with Theorem \ref{thm:2} implies $O(t^{1/2})$ discrepancy bound when $t = \Omega(\log^2 m)$.
So henceforth it is useful to think of $t \ll \log^2 m$. We
will now prove a refined bound of $k = O(m (\log (mt))^{O(1)})$. 

It is easy to see that $k$ must be at least $m \log m$ in general. This holds even if we only require the condition  $\|y^T A\|_1 > 2t$ to hold for $y=e_1,\ldots,e_m$. In particular, for $k\ll m \log m$, the expected number of ones in a row is $kt/m < t \log m$, and say for $t=O(1)$, it is quite likely that some row $j$ in $A$ will have fewer than $2t$ ones, and hence violate $\|e_j^T A\|_1 > 2t$.

\paragraph{The idea.}
Consider the net $N_\delta$ with $\delta = 1/km$ as before. By Theorem \ref{thm:boundk}, we can assume that $k \leq m^4 \log m$ and hence $\delta = 1/m^7$ suffices.

For a point $y \in N_{\delta}$, let $Y$ be the random variable $|y^T a|$, where $a$ is a random column with $t$ ones. We need to show that for each $y$ in the net, the sum of $k$ independent copies of the corresponding $Y$ random variables is more than $4t$. In the previous argument $k$ had to be large as the net $|N_{\delta}|$ is quite big and we were taking a union bound over all elements of $N_\delta$. While we cannot reduce the size of the net much, the idea here is to exploit the specific structure of the random vectors $a$ and the event that we care about. For instance, if $y$ is sparse, we get a not too small probability for $Y$ being small but there aren't too many sparse vectors in the net. We exploit such trade-offs below.

More precisely,  we consider another random variable $X \geq 0$ that will be stochastically dominated by $Y$,
and the value of $X$ will essentially only depend on the values of $a$ and only on the sign pattern of $y$ in certain specific coordinates (whose magnitudes are not too small).  This will lead to a much smaller loss in the union bound. We now give the details.

\subsubsection{The argument}

Fix some $y \in N_\delta$ with $\|y\|_1 \geq 1/2$.
We say that coordinate $i$ lies in class $j \in \{0,1,\ldots,h\}$, for $h=O(\log m)$, if
$|y_i| \in (2^{-j-1},2^{-j}]$.
We say that $i$ has positive sign if $y_i>0$ and negative sign if $y_i<0$.
 We will not care about coordinates that have value  $0$. Let us define the weight of class $j$ of $y$ as $w_j(y) = \sum_{i \in \text{class } j} |y_i|$.

Define the class $c(y)$ of $y$ as a class $j$ with the highest weight. Let $c^-(y)$ and $c^+(y)$ denote $c(y)-1$ and $c(y)+1$ respectively (if they exist). As $\|y\|_1 \geq 1/2$, class $c(y)$ has weight at least $1/2h$.
Let $n(y)$ denote the number of coordinates with class $c(y)$, and we thus have \[   2^{c(y)}/2h \leq   n(y)  \leq 2^{c(y)}. \]
As $c(y)$ is the maximum weight class, we also that the number of coordinates of class $c^+(y)$ and $c^-(y)$ is at most $2n(y)$ each.

We now define the random variable $X$ with the desired properties.

\paragraph{The random variable $X$.}
Let $i_1,\ldots,i_t$ denote the $t$ locations  of $1$ in $a$, that are picked from $[m]$ without replacement.
We use the principle of deferred decisions, and assume that the locations $i_1,\ldots,i_{t-1}$ have already been revealed, and that the randomness is only in the $t$-th choice.


Let $v = y_{i_1} + \ldots + y_{i_{t-1}}$, and note that
 \[Y = |y^T a| = |v + y_{i_t} |.\]
Our random variable $X$ will satisfy the following properties.
\begin{enumerate}
\item
For each $y,a$,  $X \leq Y$.
\item For every $y$, $X$ is completely determined by $v$, the sign pattern of coordinates in class $c^-(y),c(y),c^+(y)$, the location of $i_t$ in these three classes (if it falls in these classes), and on whether any of $\{i_1,\ldots,i_{t-1}\}$ fall in these three classes.
\end{enumerate}

We now define the random variable $X$, based on a few cases. The above properties are directly verified by inspection.

Let us first assume that $10 t < n(y) < m/10$. The remaining (corner) cases are handled easily later.

\paragraph{Balanced Case.} We call class $c(y)$ sign-balanced if $y$ has at least $n(y)/4$ coordinates in class $c(y)$ with
positive and negative signs.

If $v <0$, we define $X = 2^{-c(y)-1}$ if $i_t$ lies in class $c(y)$ and $y_{i_t} < 0$. Otherwise, $X=0$. Analogously, if $v>0$, then $X = 2^{-c(y)-1}$ if $i_t$ lies in class $c(y)$ and $y_{i_t}>0$. Otherwise, $X=0$.

Note that we always have $X \leq Y$. Moreover as $n(y) \geq 10 t$, irrespective of the locations of $i_1,\ldots,i_{t-1}$, the probability that $i_t$ lies in class $c(y)$ is at least $(9/10) n(y)/m$.
Finally, $X = 2^{-c(y)-1}$ with probability at least $n(y)/8m$ and at most $3n(y)/4m$, irrespective of the value of $v$.

\paragraph{Unbalanced Case.}
Without loss of generality, suppose that class $c(y)$ has more than $3n(y)/4$ positive signs (the other case is symmetric). We consider two further cases.

If $v \notin  (-2^{-c(y)+1/2},-2^{-c(y)-3/2})$, we set $X =  (1/4) 2^{-c(y)}$ if $i_t$ falls in class $c(y)$ and $y_{i_t}>0$. Otherwise, $X=0$.

Note that with the above definition $X \leq Y$. For, we either have $v \leq -2^{-c(y) + 1/2}$, in which case $v  + y_{i_t} \leq -2^{-c(y) + 1/2} + 2^{-c(y)} \leq -(1/4) 2^{-c(y)}$ so that $|v + y_{i_t}| \geq (1/4) 2^{-c(y)}$. Similarly, if $v \geq -2^{-c(y)-3/2}$, then $v + y_{i_t} \geq  -2^{-c(y)-3/2} + 2^{-c(y) - 1} \geq (1/4) 2^{-c(y)}$ so that $| v + y_{i_t}| \geq (1/4) 2^{-c(y)}$.

Finally, note that given that $i_t$ falls in class $c(y)$, $y_{i_t} > 0$ with probability at least $1/2$.

If $v \in (-2^{-c(y)+1/2},-2^{-c(y)-3/2})$, we set $X$ to $0$ if $i_t$ lies in any of the classes $\{c^-(y),c(y),c+(y)\}$.
Otherwise, we set $X = (1/16) 2^{-c(y)}$. We will still have $X \leq Y$ because if $i_t$ does not lie in any of the classes $\{c^-(y),c(y),c^+(y)\}$, then (i) either $|y_{i_t}| < 2^{-c(y) - 2}$ in which case $|v + y_{i_t}| > |v| - |y_{i_t}| > 2^{-c(y)} (1/2\sqrt{2} - 1/4) \geq (1/16) 2^{-c(y)}$; or $|y_{i_t}| > 2^{-c(y)+1}$ in which case $| v + y_{i_t}| > |y_{i_t}| - |v| > 2^{-c(y)+1} - 2^{-c(y) + 1/2} \geq 2^{-c(y) - 1}$. In either case, we have $X \leq Y$.

Moreover, as $n(y) \leq m/10$, there are  at least $m/2$ coordinates other than these three classes, so that above events happens with probability $\Omega(1)$.

\paragraph{Corner cases.} We now consider the remaining cases.
If $c(y) < 10t$,
we set $X=0$ if some $i_1,\ldots,i_{t-1}$  already
lies in $\{c^-(y),c(y),c^+(y)\}$. Otherwise, we proceed as above depending on whether $c(y)$ is balanced or  unbalanced; it is easy to check that the previous arguments still hold.

Further, the probability that any of $i_1,\ldots,i_{t-1}$ land in these three classes is at most $O(t^2/m) \ll 1$.

\vspace{2mm}

We now consider the other corner case when $c(y)$ contains more than $m/10$ coordinates. In the balanced case we proceed as previously. The problem arises in the argument above when $c(y)$ is unbalanced (since we relied on the event that $i_t$ falls outside the three classes happens with decent probability). So instead we do the following: Set $X=0$ if less than $t/20$ $i_1,\ldots,i_{t-1}$ lie inside $c(y)$. Set $X=0$ if $|v| \leq 10 \cdot 2^{-c(y)}$. Else, set $X=2^{-c(y)}$. Clearly, $X \leq Y$ as if $|v | > 10 \cdot 2^{-c(y)}$, then $|v  + y_{i_t}| > |v| - |y_{i_t}| \geq 9 \cdot 2^{-c(y)}$.

Now, the probability that fewer than $t/20$ indices $i_1,\ldots,i_{t-1},i_t$ lies inside $c(y)$ is at most $\exp(-\Omega(t))$. Further, if we condition at least $t/20$  $i_1,\ldots,i_{t-1}$ to lie in $c(y)$, then as $c(y)$ is unbalanced, the chance that $|v| \leq 10 \cdot 2^{-c(y)}$ is tiny.

\subsubsection{The concentration argument.}
Fix a $y$ and consider the random variable $X$ as defined above. Then, $\E[X] \geq (1/16) 2^{-c(y)} \cdot n(y)/m$ which is at least $c/mh$ for a fixed constant $c > 0$. Moreover, $X$ is bounded by $2^{-c(y)} \leq 1/n(y)$. Therefore, by Bernstein's bound, if we choose $k$ independent copies of $X_1,\ldots,X_k$ of $X$, then for $\mu = \E[X] \geq ck/2mh$,
$$\Pr[ X_1 + \ldots + X_k < \mu/2] \leq \exp\left(\frac{- \mu^2/8}{(2\mu/n(y)) + (\mu/6n(y))}\right)= \exp(-\Omega(n(y) \mu)) = \exp\left(-\Omega\left( \frac{n(y) k}{m \log m}\right)\right).$$

We now use our definition of the random variable $X$ and a union bound over what the random variable $X$ can depend on. There are about $2/\delta$ choices of $v$. Now, consider some $y$ of class $c(y)$. The behavior of $X$ is completely determined by the sign pattern on $O(n(y))$ coordinates of $y$ (corresponding to the sign-pattern of $y$ restricted to classes $\{c^-(y), c(y), c^+(y)\}$). So in the union bound, we incur an additional loss of $\exp(5n(y))$ (as $|c^-(y)|, |c^+(y)| \leq 2 n(y)$). Further, we have at most $\binom{m}{n(y)} \cdot \binom{m}{2n(y)} \cdot \binom{m}{2n(y)}$ possibilities for $\{c^-(y), c(y), c^+(y)\}$. Therefore, taking a union bound over all possible random variables $X$, we get the failure probability for a fixed $n(y)$ to be at most
$$\exp(-\Omega( n(y) k/m \log m)) \cdot (em)^{5 n(y)} \ll \exp(-\Omega(n(y) C \log m)),$$
if we take $k = C m \log^2 m$ for a sufficiently big constant $C$. Adding over all values of $n(y)$ we get that the failure probability in Theorem \ref{thm:1} is at most $m^{-\Omega(C)}$ for $k = C m \log^2 m$ for $C$ sufficiently big. This finishes the proof of Theorem 2. 

\section*{Acknowledgements}
The authors would like to thank the Simons Institute at Berkeley for hosting the two authors when the present was done.

\bibliographystyle{plain}
\bibliography{refr}

\begin{thebibliography}{10}

\bibitem{B97}
Wojciech Banaszczyk.
\newblock Balancing vectors and gaussian measures of n-dimensional convex
  bodies.
\newblock {\em Random Structures \& Algorithms}, 12(4):351--360, 1998.

\bibitem{B10}
Nikhil Bansal.
\newblock Constructive algorithms for discrepancy minimization.
\newblock In {\em Foundations of Computer Science (FOCS)}, pages 3--10, 2010.

\bibitem{BDG16}
Nikhil Bansal, Daniel Dadush, and Shashwat Garg.
\newblock An algorithm for koml\'{o}s conjecture matching banaszczyk's bound.
\newblock In {\em Foundations of Computer Science, {FOCS}}, pages 788--799,
  2016.

\bibitem{BF81}
J{\'o}zsef Beck and Tibor Fiala.
\newblock Integer-making theorems.
\newblock {\em Discrete Applied Mathematics}, 3(1):1--8, 1981.

\bibitem{Bukh16}
Boris Bukh.
\newblock An improvement of the beck-fiala theorem.
\newblock {\em Combinatorics, Probability {\&} Computing}, 25(3):380--398,
  2016.

\bibitem{Chazelle}
Bernard Chazelle.
\newblock {\em The discrepancy method: randomness and complexity}.
\newblock Cambridge University Press, 2000.

\bibitem{Panorama}
William Chen, Anand Srivastav, Giancarlo Travaglini, et~al.
\newblock {\em A Panorama of Discrepancy Theory}, volume 2107.
\newblock Springer, 2014.

\bibitem{EL15}
Esther Ezra and Shachar Lovett.
\newblock On the beck-fiala conjecture for random set systems.
\newblock In {\em {APPROX/RANDOM}}, pages 1--10, 2016.

\bibitem{F18}
Cole Franks.
\newblock Personal communication.

\bibitem{FK15}
Alan Frieze and Michal Karonski.
\newblock {\em Introduction to Random Graphs}.
\newblock Cambridge University Press, 2015.

\bibitem{RH18}
Rebecca Hoberg and Thomas Rothvoss.
\newblock A fourier-analytic approach for the discrepancy of random set
  systems.
\newblock {\em CoRR}, abs/1806.04484, 2018.

\bibitem{LevyRR17}
Avi Levy, Harishchandra Ramadas, and Thomas Rothvoss.
\newblock Deterministic discrepancy minimization via the multiplicative weight
  update method.
\newblock In {\em Integer Programming and Combinatorial Optimization, {IPCO}},
  pages 380--391, 2017.

\bibitem{LSV}
L{\'a}szl{\'o} Lov{\'a}sz, Joel Spencer, and Katalin Vesztergombi.
\newblock Discrepancy of set-systems and matrices.
\newblock {\em European Journal of Combinatorics}, 7(2):151--160, 1986.

\bibitem{LM15}
Shachar Lovett and Raghu Meka.
\newblock Constructive discrepancy minimization by walking on the edges.
\newblock {\em {SIAM} J. Comput.}, 44(5):1573--1582, 2015.

\bibitem{MAT99}
Jiri Matousek.
\newblock {\em Geometric discrepancy: An illustrated guide}.
\newblock Springer Science, 2009.

\bibitem{Spencer85}
Joel Spencer.
\newblock Six standard deviations suffice.
\newblock {\em Transactions of the American Mathematical Society},
  289(2):679--706, 1985.

\end{thebibliography}

\end{document}